\newcommand{\cA}{\mathcal{A}}
\newcommand{\cF}{\mathcal{F}}
\newcommand{\cP}{\mathcal{P}}
\newcommand{\IE}{\mathbb{E}}
\newcommand{\IF}{\mathbb{F}}
\newcommand{\IP}{\mathbb{P}}
\newcommand{\IQ}{\mathbb{Q}}
\newcommand{\IR}{\mathbb{R}}
\newcommand{\IX}{\mathbb{X}}
\newcommand{\be}{\begin{eqnarray*}}
\newcommand{\ee}{\end{eqnarray*}}
\newcommand{\ben}{\begin{eqnarray}}
\newcommand{\een}{\end{eqnarray}}
\newtheorem{theorem}{Theorem}[section]
\theoremstyle{definition}
\newtheorem{problem}{Problem}[section]
\theoremstyle{remark}
\newtheorem{remark}[theorem]{Remark}
\numberwithin{equation}{section}
 \def\Q{{\mathbb Q}}
\def\ee{\varepsilon}
   \def\p{\pi}
\begin{document}

\title{Market risk modelling in Solvency II regime and hedging options not using underlying
}

\author{Przemys\l aw Klusik
\footnote{University of Wroc\l aw, pl.\ Grunwaldzki 2/4, 50-384
Wroc\l aw, Poland, E-mail: przemyslaw.klusik@math.uni.wroc.pl}
}
\maketitle

\begin{abstract}
In the paper we develop mathematical tools of quantile hedging in incomplete market. 
Those could be used for two significant applications:
\begin{enumerate}
\item
calculating the \textbf{optimal capital requirement imposed by Solvency II} (DIRECTIVE 2009/138/EC OF THE EUROPEAN PARLIAMENT AND OF THE COUNCIL) when the market and non-market risk is present in insurance company.

We show hot to  find the minimal capital $V_0$ to provide with the one-year hedging strategy for insurance company satisfying $E\left[{\mathbf 1}_{\{V_1 \geq D\}}\right]=0.995$, where $V_1$ denotes the value of insurance company in one year time and $D$ is the payoff of the contract.

\item finding a hedging strategy for derivative not using underlying but an asset with dynamics correlated or in some other way  dependent (no deterministically) on underlying. 

The work is a genaralisation of the work of  Klusik and Palmowski \cite{KluPal}.
\end{enumerate}
\medskip

{\it Keywords:} quantile hedging, solvency II, capital modelling, hedging options on nontradable asset.

\medskip

{\it JEL subject classification:} Primary G10; Secondary
G12
\end{abstract}
\section{Introduction}
DIRECTIVE 2009/138/EC OF THE EUROPEAN PARLIAMENT AND OF THE COUNCIL of 25 November 2009 on the taking-up and pursuit of the business of Insurance and Reinsurance (Solvency II) introduces new capital regimes on insurance companies. According to Section 4, Article 101, p. 3:
 \begin{quotation}
The Solvency Capital Requirement shall be calibrated so as to ensure that all quantifiable risks to which an insurance or reinsurance  undertaking  is  exposed  are  taken  into  account.  It  shall cover existing business, as well as the new business expected to be written over the following 12 months. With respect to existing business, it shall cover only unexpected losses. 

It shall correspond to the Value-at-Risk of the basic own funds of an insurance or reinsurance undertaking subject to a confidence level of 99,5 \% over a one-year period. 
 \end{quotation}

Further according to  p. 4:
 \begin{quotation}
The Solvency Capital Requirement shall cover at least the 
following risks: 

(a)  non-life underwriting risk;

(b)  life underwriting risk;

(c)  health underwriting risk;

(d)  market risk;

(e)  credit risk;

(f)  operational risk.

\end{quotation}

The question imposed by this regulation is how much money is enough to hedge the risk with the probability 0.995. What is important here from mathematical point of view, is that the risk here involves market and non-market factors, which means that it cannot be dealt using just real expectations probability measure. This is usually neglected by insurance companies although this neglectance oposses widely accepted Black-Scholes approach.

Mathematically speaking we ask for minimal $V_0$ ensuring the probability of satysfying all the claims $E\left[{\mathbf 1}_{\{V_1 \geq D\}}\right]\geq 0.995$, where $D$ denotes the contingent claim and $V_t$ denotes the value of hedging portfolio at time t. Equivalently we can fix the capital look for strategy wieth maximal probability of successful hedging

This problem was solved in literature only for complete markets (besides Sekine \cite{sekine} and  Klusik \& Palmowski \cite{KluPal}) , i.e for financial positions which don't allow for typical insurance risk.

Foellmer \& Leukert \cite{FL99} investigate the general semimartingale setting. Authors point out the optimal strategy for a complete market with maximal  $E\left[{\mathbf 1}_{\{V_T \geq D\}}\right]$.  The proofs are based on various versions of Neymann-Pearson lemma. Spivak \& Cvitanic \cite{spivak} study a complete market framework of assets modelled with Ito processes. They also constructed a strategy with maximal $E\left[{\mathbf 1}_{\{V_T \geq D\}}\right]$ but using different proof methods. They also implement this technique for market with partial observations. Finally they consider the case where the drift of the wealth process is a nonlinear (concave) function of the investment strategy of the agent.

Klusik, Palmowski, Zwierz \cite{KluPalZwie}  solve the problem of the quantile hedging from the point of view of a better informed agent acting on the market. The additional knowledge of the agent is modelled by a filtration initially enlarged by some random variable. 

Sekine \cite{sekine} considers a defaultable securities in very simple incomplete market, where security-holder can default at some random time and receives a payoff modelled by martingale process. Author shows strategy maximizing the probability of successful hedge.

The more complex incomplete market was considered by Klusik \& Palmowski \cite{KluPal}. They consider the equity-linked product where the insurance event can take a finite numer of states and is independent on financial asset modelled with geometric Brownian motion. They construct optimal strategy for both: maximal probability and maximal expected success ratio. In their framework the knowledge about the insurance event is not revealed before the maturity.

In this paper we state a general problem of optimizing probability of non-insolvency $E\left[{\mathbf 1}_{\{V_T \geq D\}}\right]$ in a incomplete market, as in  Klusik \& Palmowski \cite{KluPal}, but we allow very general flow of information outside the market and very general space of possible non-market events. As it was said at the beginning the solution of this problem gives a solution to Solvency II problem.

In fact the solution could be used not only for Solvency II, but also for pricing instruments in incomplete markets. This would include equity-linked or options on illiquid assets or traded only over-the-counter. The replicating strategy cannot be built in such case. It may suggest to price the option as an expectation in subjective probability measure or as the cost of superhedging (very costly!).  From practical point of view very often it may be  more apprioprate to hedge the claim with some dependent (for example correlated) liquid asset and smartly allow some risk, what actually to our knowledge is done by many (also of worldwide recognition) financial institutions without quantitive tools.

This paper is organized as follows. The section \ref{sec:mathematical_model}
introduces a model of financial market and the optimality problem. We also state and give a price of hedging and construct hedging strategy.

 In section \ref{sec:example} we provide the aplication of our result to hedging a European put option on nontradable asset. We calculate the cost of  hedging strategy with the other asset with price process partially dependent on the underlying. In numerical calculation we assume that both price processes are driven by correlated geometric Brownian motion.

\section{Mathematical model}\label{sec:mathematical_model}
Consider a discounted price process $\IX=(X_t)_{t \in [0,T]}$ which  is a semimartingale on a probability space $(\Omega, \cF, \IP)$ with filtration $\IF=(\cF_t)_{t \in [0,T]}$, $\cF_T=\cF$. Note that $\IF$ may be substantially greater than filtration generated by $\IX$. The interpretation is following: the knowledge modelled by $\IF$ could be augmented by information outside the market. 
The augmentation of filtration here could be interpreted as the information signal about non-market variables important to the value of contract. An example here could be the "life" part of information about the equity-linked contract. 
We will assume that $\cF=\cF_T$.

Denote the set of all equivalent martingale measures by $\cP$ and assume that the market does not allow for arbitrage, i.e $\cP \neq \emptyset$.

A self-financing admissible trading strategy is a pair  $(V_0,\xi)$ where  $V_0$ is constant and $\xi$ is a $\IF$-predictable process on
$[0,T]$ for which the value process $V_{t}:=V_{0}+\int_{0}^{t}\xi_u\;{\rm d}X_u$, $t\in[0,T],$ is well defined and $V_t\geq 0$ $\IP$-almost surely for all $t\in [0,T]$.

Fix an initial capital $\tilde{V}_0$ and denote by $\cA$ the set of all admissible strategies $(V_0,\xi)$ such that $V_0\leq \widetilde{V}_0$.

For nonnegative real $v,d$ define a \emph{success factor} $\phi^v_d$ assuming values in $[0,\infty]$ such that  $\phi^v_d$ is nondecreasing function of $v$ for all $d$.
The following functions can serve  as examples of success factor:  $\phi^v_d:=1_{\{v\geq d\}}$ and $\phi^v_d:={\mathbf 1}_{\{v \geq d)}+{\mathbf 1}_{\{v < d\}}\frac{v}{d}$.

For a contingent claim $D$ being a $\cF_T$-measurable  nonnegative random variable we formulate the following problem:
\begin{problem}\label{P1} Find $(V_0,\xi) \in \cA$  maximizing  \emph{expected success factor} $\IE^\IP\left[\phi^{V_T}_D\right]$.
\end{problem}
\begin{figure} \label{rys_pi}
	\centering
		\includegraphics[width=100mm]{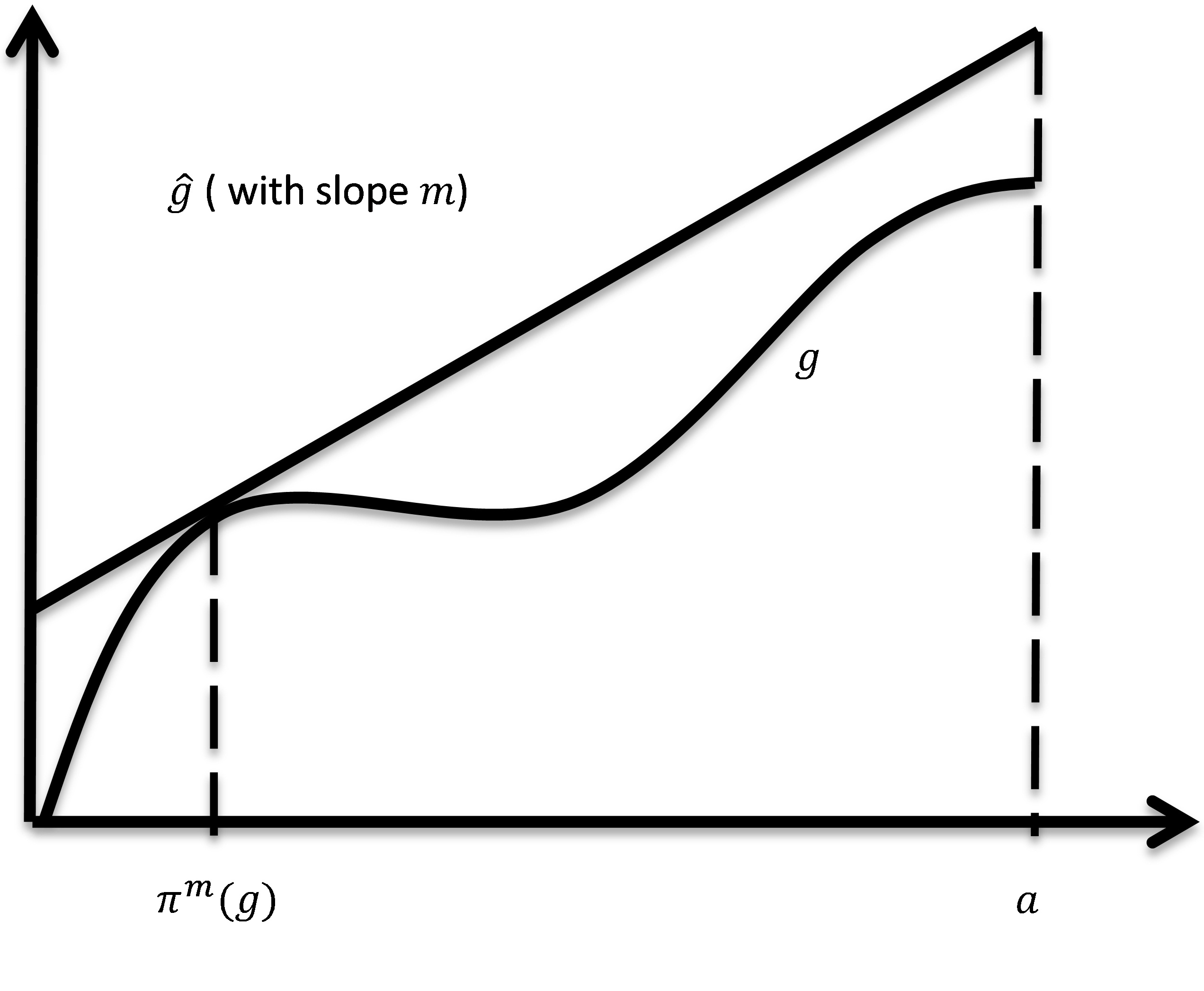} 
\caption{The picture shows the relation between $g,\hat{g},m$ and $\pi^m(g)$.}
\end{figure}
For any increasing function $g:[0,\infty]\rightarrow \IR$ and positive constant $m$ define 
\begin{equation}
\pi_m^g:=\min\{x:g(x)=\hat{g}(x)\}
\end{equation}
where $\hat{g}:[0,\infty]\rightarrow \IR$ denotes the highest line touching $g$ from above with the slope $m$ (see the figure).

Fix a measue $\IQ \in \cP$  and define 
\begin{equation}\label{defG}
G_\IQ(x):=\IE^{\IQ}\left[\frac{d\IP}{d\IQ}\phi^x_D\Big{|}\IX\right].
\end{equation}
Assume that there is a positive constant $m$  that $\pi^{G_\IQ}_m$ exists and is replicable with strategy $(V_0^*,
\xi^*)$ where $V_0^*=\widetilde{V}_0$, i.e.
\begin{equation}
\pi^{G_\IQ}_m=V^*_{0}+\int_{0}^{T}\xi^*_u\;{\rm d}X_u.
\end{equation}

\begin{remark}
Note that the assumption above is always fulfilled if all measures from $\cP$ coincide on $\sigma(\IX)$. This is true because $\pi^{G_\IQ}_m$ is  $\sigma(\IX)$-measurable. In particular this is the case for a complete market extended with contingent claims dependent on some randomness from outside the market( also in a situation when the information outside the market in revealed continously till the moment of maturity). In the next section we give a numerical procedure to find $m$ by Monte Carlo estimation.
\end{remark}
From now on we write $\pi$ as a shortcut to $\pi^{G_\Q}_m$.
\begin{theorem} \label{th_P1}
$(V_0^*,\xi^*)$ is a solution of Problem \ref{P1} with the expected success factor equal $\IE^\IP\left[\phi_D^{\pi}\right]$.
\end{theorem}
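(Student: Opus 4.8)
The plan is to prove the two halves of the claim separately: that $(V_0^*,\xi^*)$ is a feasible strategy whose expected success factor equals $\IE^\IP[\phi_D^{\pi}]$, and that this number is an upper bound for every competitor in $\cA$. The first half is immediate from the standing assumption: $(V_0^*,\xi^*)$ is a self-financing admissible strategy with $V_0^*=\widetilde V_0$, hence $(V_0^*,\xi^*)\in\cA$, and its terminal value is $V_T^*=\pi$, so its expected success factor is $\IE^\IP[\phi_D^{V_T^*}]=\IE^\IP[\phi_D^{\pi}]$. The work is in the upper bound.

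The key analytic ingredient is a tangent-line (Neyman--Pearson--type) inequality read off from the definition of $\widehat{G_\IQ}$. For $\IP$-a.e.\ $\omega$ the map $x\mapsto G_\IQ(x)$ is nondecreasing (because $\phi^x_D$ is nondecreasing in $x$ and $d\IP/d\IQ\ge 0$) and $\sigma(\IX)$-measurable; writing $\widehat{G_\IQ}(x)=mx+B$ with $B:=G_\IQ(\pi)-m\pi=\sup_{y\ge 0}\bigl(G_\IQ(y)-my\bigr)$, the defining property of $\widehat{G_\IQ}$ and of $\pi=\min\{x:G_\IQ(x)=\widehat{G_\IQ}(x)\}$ yield
\[
G_\IQ(x)\ \le\ G_\IQ(\pi)+m\,(x-\pi)\qquad\text{for all }x\ge 0,\ \IP\text{-a.s.},
\]
where $B$ and $\pi$ are $\sigma(\IX)$-measurable. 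For any $\sigma(\IX)$-measurable $\eta\ge 0$ one may freeze $\eta$ inside the conditional expectation, so $G_\IQ(\eta)=\IE^\IQ[\tfrac{d\IP}{d\IQ}\phi^\eta_D\mid\IX]$ and $\IE^\IP[\phi^\eta_D]=\IE^\IQ[G_\IQ(\eta)]$; combining with the displayed inequality,
\[
\IE^\IP[\phi^\eta_D]\ \le\ \IE^\IQ\bigl[G_\IQ(\pi)\bigr]+m\bigl(\IE^\IQ[\eta]-\IE^\IQ[\pi]\bigr),
\]
with equality at $\eta=\pi$. Thus among $\sigma(\IX)$-measurable targets of $\IQ$-price at most $\IE^\IQ[\pi]$ the value $\pi$ is optimal.

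Now take a generic $(V_0,\xi)\in\cA$. Since $X$ is a (local) $\IQ$-martingale and $V\ge 0$, the value process is a nonnegative $\IQ$-local martingale, hence a $\IQ$-supermartingale, so $\IE^\IQ[V_T]\le V_0\le\widetilde V_0$; and, taking the value process of the replicating strategy $(V_0^*,\xi^*)$ to be a true $\IQ$-martingale, $\IE^\IQ[\pi]=\widetilde V_0$. Passing to $\IQ$ and conditioning on $\IX$ one then wants
\[
\IE^\IP[\phi^{V_T}_D]=\IE^\IQ\!\Bigl[\,\IE^\IQ\bigl[\tfrac{d\IP}{d\IQ}\phi^{V_T}_D\bigm|\IX\bigr]\Bigr]\ \le\ \IE^\IQ\bigl[G_\IQ(\pi)\bigr]+m\bigl(\IE^\IQ[V_T]-\IE^\IQ[\pi]\bigr)\ \le\ \IE^\IQ\bigl[G_\IQ(\pi)\bigr]=\IE^\IP[\phi^{\pi}_D],
\]
which is the assertion, the last inequality using $\IE^\IQ[V_T]\le\widetilde V_0=\IE^\IQ[\pi]$.

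The delicate point — and the one where the hypotheses genuinely enter — is the first inequality of the last display: a generic admissible $V_T$ need not be $\sigma(\IX)$-measurable, since $\xi$ is only $\IF$-predictable and $\IF$ may strictly exceed the filtration generated by $\IX$, so $\IE^\IQ[\tfrac{d\IP}{d\IQ}\phi^{V_T}_D\mid\IX]$ is \emph{not} $G_\IQ(V_T)$ and no Jensen step is available ($\phi^\cdot_D$ is not assumed concave). This is precisely what the assumption that $\pi^{G_\IQ}_m$ is replicable controls — which, as the Remark records, is automatic when all martingale measures coincide on $\sigma(\IX)$. Under that structure the plan is to show that the $\sigma(\IX)$-conditional problem of maximising $\IE^\IQ[\tfrac{d\IP}{d\IQ}\phi^{f}_D\mid\IX]$ over admissible terminal values $f$ of given conditional $\IQ$-price is solved by a $\sigma(\IX)$-measurable $f$, so that every competitor is dominated — in expected success factor and at no larger $\IQ$-price — by a $\sigma(\IX)$-measurable target, to which the second paragraph applies. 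I expect making this reduction precise, rather than the tangent-line computation (which is routine), to be the real obstacle.
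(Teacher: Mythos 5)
Your decomposition is exactly the paper's: the candidate $(V_0^*,\xi^*)$ is feasible with value $\IE^\IP[\phi^{\pi}_D]$, the tangent-line bound $G_\IQ(x)\le G_\IQ(\pi)+m(x-\pi)$ gives optimality among $\sigma(\IX)$-measurable targets, and the budget inequality $\IE^\IQ[V_T]\le \widetilde V_0=\IE^\IQ[\pi]$ disposes of the linear term. All of that is correct and matches the paper step for step. But your proof is not finished: the reduction you defer to the last paragraph --- that a general admissible terminal value $V_T$, which is only $\cF_T$-measurable, is dominated by some $\sigma(\IX)$-measurable target of no larger $\IQ$-price --- is never carried out, and it is not a routine detail. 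As you yourself observe, $\IE^\IQ\bigl[\tfrac{d\IP}{d\IQ}\phi^{V_T}_D\mid\IX\bigr]\neq G_\IQ(V_T)$ when $V_T$ is not $\sigma(\IX)$-measurable, and no Jensen step is available since $\phi^{\cdot}_D$ is not concave. Worse, the only property of $V_T$ your argument retains is the single scalar constraint $\IE^\IQ[V_T]\le\widetilde V_0$, and under that constraint alone the desired bound is false: the conditional supremum over all $\cF_T$-measurable payoffs, $\IE^\IQ\bigl[\sup_x\bigl(\tfrac{d\IP}{d\IQ}\phi^x_D-mx\bigr)\mid\IX\bigr]$, in general strictly exceeds $\sup_x\bigl(G_\IQ(x)-mx\bigr)$. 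Any honest completion must therefore use more of the structure of attainable claims (for instance the supermartingale property under \emph{every} measure in $\cP$, or a measurability restriction on $V_T$) than either your argument or the budget inequality provides.

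You should also know that the paper does not close this gap: its proof asserts $\IE^\IQ[\widehat G^m_\IQ(V_T)]\ge\IE^\IQ[G_\IQ(V_T)]$ and then silently identifies $\IE^\IQ[G_\IQ(V_T)]$ with $\IE^\IP[\phi^{V_T}_D]$, which is precisely the substitution that is legitimate only for $\sigma(\IX)$-measurable $V_T$. So the step you flag as ``the real obstacle'' is the step the paper performs by fiat. The identification does hold in the Klusik--Palmowski setting that the paper generalizes, where the non-market information is revealed only at maturity, so that every $\IF$-predictable integrand is in effect predictable for the filtration of $X$ and every attainable $V_T$ is automatically $\sigma(\IX)$-measurable; if you add that hypothesis (or restrict $\cA$ accordingly), your argument is complete and coincides with the paper's. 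Without it, neither your proof nor the paper's establishes the upper bound for general $\IF$-predictable competitors, and your diagnosis of where the difficulty sits is accurate.
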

\begin{proof}  

 For any $(V_0,\xi) \in \cA$ holds $\IE^\IQ\left[V_T -\pi\right] \leq V_0- \widetilde{V}_0 \leq 0$.

For every $x$ and a. a. $\omega \in \Omega$ the inequality holds $\widehat{G}_\IQ^m(x)(\omega) \geq G_\IQ(x)(\omega)$ where $\widehat{G}_\IQ^m(x)=G_\IQ\left(\p\right)+m\left(x-\pi\right)$. Thus $\IE^\IQ\left[\widehat{G}_\IQ^m(V_T) \right] \geq\IE^\IQ\left[ G_\IQ(V_T)\right]$, i.e.
\begin{equation}
\IE^\IP\left[\phi^{V^*_T}_D\right]= \IE^\IP\left[\phi^{\pi}_D\right]\geq\IE^\IP\left[\phi^{\pi}_D\right]+m\IE^\IQ\left[V_T-\pi\right]\geq\IE^\IP\left[\phi^{V_T}_D\right].\nonumber
\end{equation}

Note that  $(V_0^*,\xi^*) \in \cA$ because $V^*_0=\widetilde{V}_0$, so left side of inequality is attainable.

\end{proof}
\section{Applications}\label{sec:example}
\subsection{Hedging contingent without underlying}

We consider a situation where we sell a put option on nontradable asset $Y$ with the payoff $D=(K-Y_T)^+$. We are going to hedge it using tradable asset $X$ with the strategy maximizing $\IP(V_T \geq D)$. Assume that the dynamics of two price processes is given with following equations
\begin{eqnarray}
dX_t&=&\mu_X X_t dt+\sigma_X X_tdW^X_t\nonumber, \quad X_0=x_0>0\\
dY_t&=&\mu_Y Y_t dt+\sigma_Y Y_tdW^Y_t\nonumber, \quad Y_0=y_0>0
\end{eqnarray}
where we assume a correlation $\rho$ between two Brownian motions $W^Y$ and $W^X$, i.e. $W^Y=\rho W^X+\sqrt{1-\rho^2}W$ where $W$ is some Brownian motion independent on $W^X$. We assume that the interest rate is equal zero. 
For $0<x<D$ we have
\begin{eqnarray}
& &G_\IQ(x)\nonumber\\
&=&\IE^{\IQ}\left[\frac{d\IP}{d\IQ}1_{\{D\leq x\}}\Big{|}\IX\right]= \frac{d\IP}{d\IQ}\IE^{\IQ}\left[1_{\{(K-Y_T)^+\leq x\}}\Big{|}\IX\right]\nonumber\\
&=&\frac{d\IP}{d\IQ}\IQ\left[K-x\leq y_0 \exp{\left\{ \mu_YT+\sigma_Y(\rho W_T^X+\sqrt{1-\rho^2}W_T)-\frac{1}{2}\sigma_Y^2T\right\}}\Big{|}\IX\right]\nonumber\\
&=&\frac{d\IP}{d\IQ}\IQ\left[\ln\left(\frac{K-x}{y_o}\right)\leq \mu_Y T+\sigma_Y(\rho W_T^X+\sqrt{1-\rho^2}W_T)-\frac{1}{2}\sigma_Y^2T\Big{|}\IX\right]\nonumber\\
&=&\frac{d\IP}{d\IQ}\IQ\left[\frac{\ln\left(\frac{K-x}{y_o}\right)- \mu_Y T-\sigma_Y\rho W_T^X+\frac{1}{2}\sigma_Y^2T}{\sigma_Y\sqrt{1-\rho^2}}\geq W_T\Big{|}\IX\right]\nonumber\\
&=& \exp{\left\{\frac{\mu_X}{\sigma_X}W_T^X+\frac{1}{2}\frac{\mu^2_X}{\sigma^2_X}T\right\}}\left(1-\Phi \left(\frac{\ln\left(\frac{K-x}{y_o}\right)- \mu_Y T-\sigma_Y\rho W_T^X+\frac{1}{2}\sigma_Y^2T}{\sigma_Y\sqrt{T(1-\rho^2)}}\right)\right)\nonumber,
\end{eqnarray}
where $\Phi$ denotes the cdf of standard normal distribution.
\begin{figure} \label{symulacja}
\centering
	\includegraphics[width=90mm]{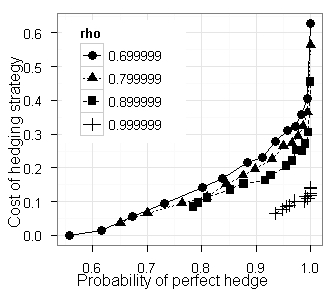} 
\caption{ The results of numerical simulations for described algorithm.}
\end{figure}
We describe the sketch of numerical algorithm basing on a Monte Carlo approach:
\begin{enumerate}
\item Fix a real number $m\geq 0$ and integers $N_x>0, N_W>0$.
\item Draw a sample $w_1,\ldots,w_{N_W}$ from normal distribution with mean 0 and variance $T$.
\item For $i=1,\ldots,N_W$ find  such $x$ from set $\{0,\frac{1K}{N_x},\frac{2K}{N_x},\ldots,\frac{N_xK}{N_x}\}$ maximizing expression 
\begin{equation} \label{dynamics}
 \exp{\left\{\frac{\mu_X}{\sigma_X}w_i+\frac{1}{2}\frac{\mu^2_X}{\sigma^2_X}T\right\}}\left(1-\Phi \left(\frac{\ln\left(\frac{K-x}{y_o}\right)- \mu_Y T-\sigma_Y\rho w_i+\frac{1}{2}\sigma_Y^2T}{\sigma_Y\sqrt{T(1-\rho^2)}}\right)\right)-mx\nonumber
\end{equation}
and denote it by $x^{\max}(i)$.
\item The solution is following: For an initial capital equal to
\begin{equation}
\frac{1}{N_W}\sum_{i=1}^{N_W} \exp{\left\{-\frac{\mu_X}{\sigma_X}w_i-\frac{1}{2}\frac{\mu^2_X}{\sigma^2_X}T\right\}}x^{\max}(i)\nonumber
\end{equation} 
maximal expected success factor is equal
\begin{equation}
\frac{1}{N_W}\sum_{i=1}^{N_W} \left(1-\Phi \left(\frac{\ln\left(\frac{K-x^{\max}(i)}{y_o}\right)- \mu_Y T-\sigma_Y\rho w_i+\frac{1}{2}\sigma_Y^2T}{\sigma_Y\sqrt{T(1-\rho^2)}}\right)\right).\nonumber
\end{equation} 
\end{enumerate}
Different $m$ would give a different  initial capital and expected success factor.

The figure shows the results of simulations for put option maturing at time $T = 1$ with the strike $K = 1$.
The price dynamics  follows \ref{dynamics} with parameters $\mu_X =\mu_Y= 0.1$, $\sigma_X = \sigma_Y= 0.3$, $Y_0 =X_0= 1$. The diagram illustrates the dependence for different levels of $\rho$. 

One could verify that as expected  if $X$ almost mimicks $Y$  (i.e. $\rho$ is almost 1),  the hedging strategy with $X$ should be very close to the hedging strategy with $Y$ if $Y$ was tradable. The cost of the last  is equal $0.119235$ as the result of standard Black-Scholes formula.

\subsection{Applications: Solvency II}
Usually during capital modelling insurance companies ignore the basic difference between market risks and insurance (and other nonmarket risks): the fact that market risks can be  hedged away using underlying asset in our framework.
In our seeting this would mean  taking $T=1$, $\phi^v_d:=1_{\{v\geq d\}}$ and $D$ that represent all  liabilities of insurance company at time $1$.

It is worth to stress that our solution give gives a strategy to get minimal probability of insolvency. Equivalently for given probability (here 0.995) we get \underline{minimal} capital needed. This solution doesn't assume static or almost static posioni, what is usually done in practic, but point out the best strategy(possibly dynamic).

\section{Acknowledgements}
This author's research was supported by the Ministry of Science and
Higher Education grant NCN 2011/01/B/HS4/00982.

\bibliographystyle{plain}
{\small\bibliography{PhD}}

\def\cprime{$'$} \def\cprime{$'$}
\begin{thebibliography}{1}

\bibitem{FL99}
H.~F{\"o}llmer and P.~Leukert.
\newblock Quantile hedging.
\newblock {\em Finance Stoch.}, 3(3):251--273, 1999.

\bibitem{KluPal}
Przemyslaw Klusik and Zbigniew Palmowski.
\newblock Quantile hedging for equity-linked contracts.
\newblock {\em Insurance: Mathematics and Economics}, 48(2):280--286, March
  2011.

\bibitem{KluPalZwie}
J.~Zwierz P.~Klusik, Z.~Palmowski.
\newblock Quantile hedging for an insider.
\newblock {\em Probability and Mathematical Statistics}, 30(2), 2010.

\bibitem{sekine}
J.~Sekine.
\newblock Quantile hedging for defaultable securities in an incomplete market.
\newblock {\em S\=urikaisekikenky\=usho K\=oky\=uroku}, (1165):215--231, 2000.
\newblock Mathematical economics (Japanese) (Kyoto, 1999).

\bibitem{spivak}
Gennady Spivak and Jaksa Cvitanic.
\newblock Maximizing the probability of a perfect hedge.
\newblock {\em Ann. Appl. Probab.}, 09(04):1303--1328, 1999.

\end{thebibliography}

\end{document}